\documentclass[12pt]{article}
\usepackage{mathrsfs}
\usepackage{dsfont}
\usepackage{amsthm}
\usepackage{mathrsfs}
\usepackage{amsmath}
\usepackage{amsfonts}
\usepackage[colorlinks,linkcolor=black,anchorcolor=blue,citecolor=blue]{hyperref}
\usepackage{amssymb, amsmath, cite}

\setlength{\textwidth}{6.5truein} \setlength{\textheight}{9.3truein}
\setlength{\oddsidemargin}{-0.0in}
\setlength{\evensidemargin}{-0.0in}
\setlength{\topmargin}{-0.6truein}
\newtheorem{theorem}{Theorem}[section]
\newtheorem{lemma}{Lemma}[section]

\newcommand\be{\begin{equation}}
\newcommand\ee{\end{equation}}
\newcommand\ber{\begin{eqnarray}}
\newcommand\eer{\end{eqnarray}}
\newcommand\berr{\begin{eqnarray*}}
\newcommand\eerr{\end{eqnarray*}}
\newcommand\bea{\begin{eqnarray}}
\newcommand\eea{\end{eqnarray}}

\newcommand{\nn}{\nonumber}

\newcommand{\dd}{\mathrm{d}}
\newcommand\e{\mathrm{e}}\newcommand\pa{\partial}
\setlength{\baselineskip}{18pt}{\setlength\arraycolsep{2pt}

\begin{document}

\title{Cosmic strings in a generalized linear formulation of gauge field theory}
\author{Lei Cao\\School of Mathematics and Statistics\\Henan University\\
Kaifeng, Henan 475004, PR China\\ \\Shouxin Chen\\
School of Mathematics and Statistics\\Henan University\\
Kaifeng, Henan 475004, PR China}
\date{}
\maketitle

\begin{abstract}
In this note we construct self--dual cosmic strings from a gauge field theory with a generalized linear formation of potential energy density. By integrating the Einstein equation, we obtain a nonlinear elliptic equation which is equal with the sources. We prove the existence of a solution in the broken symmetry category on the full plane and the multiple string solutions are valid under a sufficient condition imposed only on the total string number $N$. The technique of upper--lower solutions and the method of regularization are employed to show the existence of a solution when there are at least two distant string centers. When all the string centers are identical, fixed point theorem are used to study the properties of the nonlinear elliptic equation. Finally, We give the sharp asymptotic estimate for the solution at infinity.
\end{abstract}

\medskip
\begin{enumerate}

\item[]
{Keywords:} gauge field theory, self--duality, cosmic strings, existence, asymptotic estimates.

\item[]
{MSC numbers(2020):} 35J60, 81T13.

\end{enumerate}

\section{Introduction}\label{s1}

Spontaneous symmetry breaking causes phase transitions in the early universe and generates stable topological defects, such as domain walls, vortex strings, and monopoles \cite{Ki,Kib}. These objects plays important roles in cosmology and has aroused great concern in recent years. Particularly, cosmic strings have received the considerable attention for the reason that the wide variety of astrophysical phenomena \cite{Ze,Vi,Wi,Br} they have been involved to explain.

Static solutions of the coupled Einstein and Yang--Mills--Higgs equations are called cosmic strings, which keep cylindrical symmetry. And the system represents the coupling of gravity and superconductivity under the action of a gauged Abelian group. The multi--center display of energy and curvature around the local strings is one of the most interesting features of the cosmic string solutions. These local strings are believed to seed the accretion of matter in the early stages of galaxy formation.

Based on a prescribed Dirac distribution source formalism, Letelier \cite{Le} gave the first and simplest example of a multiple string solution to the reduced Einstein equation. The second concrete example of cosmic strings that lay structural foundation is from the construction of Comtet and Gibbons \cite{Co} based on the $\sigma$--model or the harmonic map model. Yang \cite{YaO,YaP} constructed multiple cosmic strings of classical Abelian Higgs theory based on the formalism in \cite{Co}. When gravity is absent, the Hamiltonian energy density describing multiply distributed electrically and magnetically charged vortices, that is to say, cosmic string solutions degenerate into vortex solutions without gravity. Vortex solutions in Chern--Simons--Higgs theory have been well studied in \cite{Caf,Cha,Spr,Spru,Tar}. See also Manton \cite{Ma} and Gudnason \cite{Gu} for some recent reviews on vortex equations. The original goal of our work is to investigate the existence of string solutions introduced from an gauge field model \cite{Lo,Loh,Ho} containing the Higgs field $\phi$ which is a cross--section on the $U(1)$--line bundle $L$ \cite{Gr} over the spacetime and the gauge field $A$ which is a connection $1$--form. We transform the self--dual system and Einstein equation into a string equation which is a nonlinear elliptic equation with highly complex exponential form. Then, we use the technique of upper--lower solutions, the method of regularization and fixed point theorem to establish the existence theorem of solutions of the nonlinear elliptic equation.

The rest of our paper is organized as follows. In section 2, we introduce the gauge field model to be used here to generate a system of strings and reduce the equation of motion into a self--dual system. Section 3 is devoted to obtain an elliptic equation equal with the self--dual system and Einstein equation. In section 4, we prove the existence of a solution for the elliptic equation governing strings on the full plane. In order to avoid the difficulty caused by gravity, we introduce a $\delta$--regularization of the equation and then take limit as $\delta\to 0$ to solve the original equation. According to the distribution of string centers, the proof is divided into two cases.

\section{Gauge field theory and strings}\label{s1}

According to \cite{Lo,Loh,Ho} the Lagrangian is
\be\label{1.1}
\mathcal{L}=-\frac{1}{4}F_{\mu\nu}^2+\frac{1}{2}F(s)D_\mu\phi\overline{D_\nu\phi}-\frac{1}{2}w^2(s),
\ee
where $F_{\mu\nu}=\partial_\mu A_\nu-\partial_\nu A_\mu$ is the electromagnetic curvature induced from the 4-vector connection $A_\mu (\mu, \nu=0,1,2,3, t=x_0)$, $D_\mu\phi=\partial_\mu\phi-iA_\mu\phi$ is the gauge--covariant derivatives, $F$ and $w$ are real functions, depending only on $s=|\phi|^2$. Let $g_{\mu\nu}$ be the gravitational metric of the spacetime of signature $(+---)$. Thus the action density of the matter--gauge sector that modifies \eqref{1.1} into
\be\label{1.2}
\mathcal{L}=-\frac{1}{4}g^{\mu\alpha}g^{\nu\beta}F_{\mu\nu}F_{\alpha\beta}+\frac{1}{2}g^{\mu\nu}F(s)D_\mu\phi\overline{D_\nu\phi}-\frac{1}{2}w^2(s),
\ee
The energy--momentum tensor obtained by varying the gravitational metric in the action $L=\int\mathcal{L}$ with $\mathcal{L}$ being defined by \eqref{1.2} is
\be\label{1.4}
T_{\mu\nu}=-g^{\alpha\beta}F_{\mu\alpha}F_{\nu\beta}+\frac{1}{2}F(s)(D_\mu\phi\overline{D_\nu\phi}+\overline{D_\mu\phi}D_\nu\phi)-g_{\mu\nu}\mathcal{L}.
\ee
We are interested in the Einstein equations coupled with the Higgs model \eqref{1.2}, that is
\bea
&&G_{\mu\nu}=-8\pi G T_{\mu\nu},\label{1.3a}\\
&&\frac{1}{\sqrt{|\text{det}(g_{\alpha\beta})|}}\pa_\mu\left(F(s)\sqrt{|\text{det}(g_{\alpha\beta})|}g^{\mu\nu}\pa_\nu \phi\right)=\left(F'(s)g^{\mu\nu}D_\mu u\overline{D_\nu \phi}-2w(s)w'(s)\right)\phi,\label{1.3b}\\
&&\frac{1}{\sqrt{|\text{det}(g_{\alpha\beta})|}}\pa_{\mu'}\left(g^{\mu\nu}g^{\mu'\nu'}\sqrt{|\text{det}(g_{\alpha\beta})|} F_{\nu\nu'}\right)=F(s)\frac{i}{2}g^{\mu\nu}(\phi\overline{D_\nu \phi}-\overline{\phi}D_\nu \phi),\label{1.3c}
\eea
which produced cosmic string solutions.
Cosmic string solutions are a special class of static field configuration which depend on the coordinate $(x_j)(j=1,2)$, $A_0=A_3=0$, and the metric is of the form
\be\label{1.5}
g_{\mu\nu}=\text{diag}\{1,-\e^\eta,-\e^\eta,-1\}.
\ee
Then $T_{\mu\nu}$ is simplified to
\bea\label{1.6}
T_{00}&=&\mathcal{H}, T_{33}=-\mathcal{H}, T_{03}=T_{0j}=T_{3j}=0,\notag\\
T_{jk}&=&g^{\alpha\beta}F_{j\alpha}F_{k\beta}+\frac{1}{2}F(s)(D_j\phi\overline{D_k\phi}+\overline{D_j\phi}D_k\phi)-g_{jk}\mathcal{H},
\eea
where
\be\label{1.7}
\mathcal{H}=\frac{1}{4}g^{\mu\alpha}g^{\nu\beta}F_{\mu\nu}F^{\mu\nu}+\frac{1}{2}g^{\mu\nu}F(s)D_\mu\phi\overline{D_\mu\phi}+\frac{1}{2}w^2(s)
\ee
is the energy density of the Higgs sector. Besides, if the Gauss curvature of $(\mathbb{R}^2,\e^\eta\delta_{jk})$ be denoted by $K_g$ which can be calculated by the expression
\be\label{1.8a}
K_g=-\frac 12 \e^{-\eta}\triangle\eta,
\ee
and the Einstein tensor reduces into the form
\bea\label{1.8}
-G_{00}&=&G_{33}=K_g,\nn\\
G_{\mu\nu}&=&0~~\text{for other values of}~~\mu, \nu.
\eea
Then, the Einstein equations \eqref{1.3a} become the following two--dimensional Einstein equations
\be\label{1.9}
K_g=8\pi G\mathcal{H}.
\ee

The system \eqref{1.3a}--\eqref{1.3b} is hard to approach. However, we can reduce the difficulty by explore the self--dual structure. To proceed, we introduce a current density
\be\label{1.10}
J_k=\frac{i}{2}f(s)(\phi \overline{D_k\phi}-\overline{\phi} D_k\phi),~~j=1,2.
\ee
By Kato's identity
\be\label{1.11}
\partial_j(\phi\overline{\psi})=\overline{\psi}\partial_j\phi+\phi\partial_j\overline{\psi}=\overline{\psi}D_j\phi+\phi \overline{D_j\psi},
\ee
we can easily verify the following commutation relation for gauge--covariant derivatives
\be\label{1.12}
[D_j,D_k]\phi=(D_j D_k-D_k D_j)\phi=-i(\partial_j A_k-\partial_k A_j)\phi=-i F_{jk}\phi,~~j,k=1,2.
\ee
Differentiating \eqref{1.10} and applying $\partial_j(\phi\overline{\phi})=\overline{\phi}D_j\phi+\phi \overline{D_j\phi}$, we obtain
\be\label{1.13}
J_{12}=\partial_1 J_2-\partial_2 J_1=-sf(s)F_{12}+i(f(s)+f'(s)s)(D_1\phi \overline{D_2\phi}-\overline{D_1\phi} D_2\phi).
\ee
Besides, it can be shown that there holds the identity
\be\label{1.14}
|D_1\phi\pm i D_2\phi|^2=|D_1\phi|^2+|D_2\phi|^2\mp i(D_1\phi \overline{D_2\phi}-\overline{D_1\phi} D_2\phi).
\ee

Therefore, in order to derive the self--dual reduction of the system \eqref{1.3b}, we may apply \eqref{1.10}, \eqref{1.13}, and \eqref{1.14} to note that the energy density $\mathcal{H}=T_{00}$ induced from the energy--momentum tensor \eqref{1.4} of the matter--gauge sector has the representation
\bea\label{1.15}
\mathcal{H}&=&\frac 12\e^{-2\eta}F_{12}^2+\frac 12\e^{-\eta}F(s)(|D_1\phi|^2+|D_2\phi|^2)+\frac 12 w^2(s)\nn\\
&=&\frac 12(\e^{-\eta}F_{12}\mp w)^2+\frac 12\e^{-\eta}F(s)|D_1\phi\pm i D_2\phi|^2\pm \e^{-\eta}F_{12}\nn\\
&& \pm \e^{-\eta}(w-1)F_{12}\pm \frac i2\e^{-\eta}F(s)(D_1\phi \overline{D_2\phi}-\overline{D_1\phi} D_2\phi)\nn\\
&=&\frac 12(\e^{-\eta}F_{12}\mp w)^2+\frac 12\e^{-\eta}F(s)|D_1\phi\pm i D_2\phi|^2\pm \e^{-\eta}(F_{12}+J_{12}).
\eea
Thus, we obtain the self--dual system
\bea
D_1\phi\pm i D_2\phi&=&0,\label{1.16}\\
F_{12}&=&\pm\e^{\eta}w(s).\label{1.17}
\eea
It is easy to examine that any solution of \eqref{1.16}--\eqref{1.17} necessarily satisfies \eqref{1.3b} where the metric tensor is defined by \eqref{1.5}. Besides, comparing the terms in \eqref{1.15} we can get the following equivalence relations
\bea
-f(s)s&=&w(s)-1,\label{1.18}\\
f(s)+f'(s)s&=&\frac 12 F(s).\label{1.19}
\eea
Differentiating \eqref{1.18} we have
\be\label{1.20}
-f(s)-f'(s)s=w'(s),
\ee
then in view of \eqref{1.19}, we find that
\be\label{1.21}
w(s)=\frac 12\int_s^1 F(t)\dd t.
\ee

\section{The governing elliptic problem}\label{s1}

In this section we derive an nonlinear elliptic equation from the self--dual system \eqref{1.16}--\eqref{1.17} and the Einstein equation \eqref{1.9}. By the equivalent, we can get the multi--string solutions through solving the nonlinear elliptic equation. For this purpose, use the equation \eqref{1.16}, we obtain
\be\label{1.22}
(\pa_1\pm i \pa_2)\ln\phi=i(A_1\pm i A_2).
\ee
Thus
\be\label{1.23}
F_{12}=\pa_1A_2-\pa_2A_1=\mp\frac 12 \triangle\ln|\phi|^2,
\ee
and the system \eqref{1.16}--\eqref{1.17} is reduced via $u=\ln |\phi|^2$ to
\be\label{1.24}
\triangle u=-2\e^{\eta}w(\e^u)+4\pi\sum_{s=1}^N \delta_{p_s},
\ee
where $p_s, s=1,\cdots,N$ are zeros of $u$ and determine the locations of strings, $\delta_{p_s}$ is the Dirac distribution at $p_s$.

We are now ready to resolve the Einstein equation \eqref{1.9}.
In view of \eqref{1.17}--\eqref{1.19} and \eqref{1.23}, when away from zeros of $u$, we may rewrite the energy density $\mathcal{H}$ in the form
\bea\label{1.25}
\mathcal{H}&=&\frac 12\e^{-2\eta}F_{12}^2+\frac 12\e^{-\eta}F(s)(|D_1\phi|^2+|D_2\phi|^2)+\frac 12 w^2(s)\nn\\
&=&\e^{-2\eta}F_{12}^2+\frac 12\e^{-\eta}F(s)(|D_1\phi|^2+|D_2\phi|^2)\nn\\
&=&\pm\e^{-\eta}F_{12}w+\frac 12\e^{-\eta}F(s)\left(\frac 12\e^u|\nabla u|^2\right)\nn\\
&=&\frac 12\e^{-\eta}\left((\e^u f(\e^u)-1)\triangle u+(f(\e^u)+f'(\e^u)\e^u)e^u|\nabla u|^2\right)\nn\\
&=&\frac 12\e^{-\eta} \triangle \left(g(e^u)-u\right),
\eea
where
\be\label{1.26}
g(s)=\int_0^s f(\rho)\dd \rho.
\ee
Since $\mathcal{H}$ is a smooth function, the above expression indicates that we can compensate the singular sources at the points $p_s$ to arrive the relation in the full $\mathbb{R}^2$
\bea\label{1.27}
\e^{\eta}\mathcal{H}&=&\frac 12\triangle\left(g(e^u)-u\right)+2\pi\sum_{s=1}^N\delta_{p_s}\nn\\
&=&\frac 12\triangle\left(g(e^u)-u+\sum_{s=1}^N\ln|x-p_s|^2\right).
\eea
Thus, inserting the above equation \eqref{1.27} and \eqref{1.8a} into \eqref{1.9}, we see that
\be\nn
\frac{\eta}{8\pi G}+g(\e^u)-u+\sum_{s=1}^N\ln|x-p_s|^2
\ee
is a harmonic function, which we assume to be a constant. Consequently, we obtain the explicit form of $\e^\eta$ of the metric as follows
\be\label{1.28}
\e^\eta=\frac{g_0}{2}\left(\e^{g(\e^u)-u}\prod_{s=1}^N|x-p_s|^2\right)^{-8\pi G},
\ee
where $g_0>0$ is an arbitrary constant. Inserting \eqref{1.28} into \eqref{1.24}, we get the final governing equation
\be\label{1.29}
\triangle u=-g_0\left(\e^{g(\e^u)-u}\prod_{s=1}^N|x-p_s|^2\right)^{-8\pi G}w(\e^u)+4\pi\sum_{s=1}^N \delta_{p_s}.
\ee

If we take $w(s)=1-s$, then we arrive at the classical Abelian Higgs strings
\be\label{1.30}
\triangle u=g_0\frac{(\e^u-1)\e^{8\pi G u}}{\e^{8\pi G \e^u}}\left(\prod_{s=1}^N|x-p_s|^{-2}\right)^{8\pi G}+4\pi\sum_{s=1}^N \delta_{p_s}.
\ee
In this paper, we concentrate on the generalized situation $w(s)=1-s^m, m>0$ and the corresponding governing equation read as
\be\label{1.31}
\triangle u=g_0\frac{(\e^{mu}-1)\e^{8\pi G u}}{\e^{\frac{8\pi G}{m}\e^{mu}}}\left(\prod_{s=1}^N|x-p_s|^{-2}\right)^{8\pi G}+4\pi\sum_{s=1}^N \delta_{p_s}.
\ee
We are interested in solutions in the broken symmetry category so that $u=0$ at infinity.


\section{Existence of strings}
\setcounter{equation}{0}

According to the distribution of string centers, we give the existence result for multiple strings with two methods. When there are at least two different string centers, we use the technique of upper-lower solutions. When all the string centers are identical, we introduce fixed point theorem.

\begin{theorem}\label{th2.1}
Suppose $8\pi G N<1$ or $8\pi G N=1$ and there are at least two distant string centers, then the equation \eqref{1.31} has a solution which vanishes at infinity.
\end{theorem}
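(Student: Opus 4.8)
\medskip
\noindent\emph{Proof strategy.}
Write $b:=8\pi G$ and $W(x):=\prod_{s=1}^N|x-p_s|^{-2b}$. The first step removes the Dirac sources while simultaneously taming the singular weight. Choosing the background $u_0(x)=\sum_{s=1}^N\ln\frac{|x-p_s|^2}{1+|x-p_s|^2}$, one has $\triangle u_0=4\pi\sum_{s=1}^N\delta_{p_s}-g_1$ with $g_1:=\sum_{s=1}^N 4(1+|x-p_s|^2)^{-2}>0$ smooth and in $L^1(\bfR^2)$, and the algebraic identity $\e^{bu_0}W(x)=\prod_{s=1}^N(1+|x-p_s|^2)^{-b}=:K(x)$ shows that $K$ is smooth, bounded, strictly positive, with $K(x)=O(|x|^{-2bN})$ at infinity. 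Writing $u=u_0+v$, equation \eqref{1.31} becomes the regular semilinear problem
\be
\triangle v=g_0\,K(x)\,\frac{(\e^{m(u_0+v)}-1)\,\e^{bv}}{\exp\!\big(\tfrac bm\,\e^{m(u_0+v)}\big)}+g_1=:\mathcal F(x,v),\qquad v(x)\to 0\ \ (|x|\to\infty),
\ee
in which all the gravitational input sits in the fact that $K$ decays only like $|x|^{-2bN}$ and hence fails to be integrable at infinity precisely in the admissible range $bN\le 1$.

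I would then solve $\triangle v=\mathcal F(x,v)$ by the method of upper and lower solutions, after the $\delta$--regularization announced in the paper: approximate $g_1$ and $K$ by rapidly decaying (or compactly supported) data, equivalently solve the Dirichlet problem on exhausting disks $B_{1/\delta}$, so that $(-\triangle)^{-1}$ is available and nothing is lost at infinity. The natural supersolution is $\bar v\equiv 0$: since $u_0\le 0$ this amounts to the pointwise inequality $g_1\le g_0K(x)(1-\e^{mu_0})\exp(-\tfrac bm\e^{mu_0})$, which need only be checked where $u_0$ is close to $0$, i.e.\ near infinity, where the two sides behave like $|x|^{-4}$ and $|x|^{-2-2bN}$; this holds because $bN\le 1$, and a positive supersolution of order $|x|^{-2}$ covers the end--point $bN=1$ for every $g_0$. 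For the subsolution I would take $\underline v$ solving a Helmholtz equation $\triangle\underline v-c^2\underline v=g_1$, so that $\underline v<0$, $\underline v\le\bar v$, and $\underline v=O(|x|^{-4})\to 0$; the subsolution inequality reduces to $c^2|\underline v|\le g_0K(x)(1-\e^{m(u_0+\underline v)})\e^{b\underline v}\exp(-\cdots)$, valid once $c$ is small, again invoking $bN\le 1$ at infinity. Monotone iteration between $\underline v$ and $0$ yields $\underline v\le v_\delta\le 0$; interior elliptic estimates are $\delta$--uniform, so along a subsequence $v_\delta\to v$, a classical solution of the equation on $\bfR^2$ with $\underline v\le v\le 0$, and since $\underline v\to 0$ at infinity so does $v$; then $u=u_0+v$ is the desired broken--symmetry solution of \eqref{1.31} vanishing at infinity, with zeros exactly at $p_1,\dots,p_N$.

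The delicate point is the behaviour at infinity in the critical case $bN=1$: there the power $|x|^{-2bN}$ of the weight and the power $|x|^{-4}$ produced by $g_1$ and by $\underline v$ balance exactly, the sign inequalities above lose their margin, and the estimates must still survive the limit $\delta\to 0$. The hypothesis of at least two \emph{distinct} string centers restores the margin — each factor $|x-p_s|^{-2b}$ with $s\ne s'$ keeps the weight strictly away from the source near every $p_{s'}$, so the uniform bounds do not degenerate — and this is why that configuration is admitted at $bN=1$. When all $p_s$ coincide the equation is radially symmetric; I would then not argue by comparison but peel off the singular part $2N\ln|x-p|$, pass to the variable $t=\ln|x-p|$, and solve the resulting ordinary differential equation by a contraction (Schauder) fixed--point argument in a weighted $C^0$ space, which only closes for $bN<1$ — exactly the restriction the theorem places on that case.
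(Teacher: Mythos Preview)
Your overall architecture (background subtraction $u=u_0+v$, sub/super solutions, regularization, diagonal limit) matches the paper, and your identity $\e^{bu_0}W(x)=K(x)=\prod_s(1+|x-p_s|^2)^{-b}$ is a genuinely nice observation. However, the barriers are set up with the wrong sign, and this is a real gap, not a typo.

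For the equation $\triangle v=\mathcal F(x,v)$ the supersolution (upper barrier) must satisfy $\triangle\bar v\le\mathcal F(x,\bar v)$ and the subsolution $\triangle\underline v\ge\mathcal F(x,\underline v)$. Your condition for $\bar v\equiv0$, namely $g_1\le g_0K(x)(1-\e^{mu_0})\exp(-\tfrac bm\e^{mu_0})$, is exactly $\mathcal F(x,0)\le0=\triangle 0$, which makes $0$ a \emph{sub}solution, not a supersolution; this is precisely what the paper proves in its Lemma showing $0>\mathcal F_\delta(x,0)$ for $g_0$ large. Likewise, your Helmholtz function $\underline v<0$ with $\triangle\underline v=c^2\underline v+g_1$ and the inequality $c^2|\underline v|\le g_0K(\cdots)$ again says $\triangle\underline v\ge\mathcal F(x,\underline v)$, so it is another subsolution. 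You therefore have two subsolutions and no supersolution, and monotone iteration cannot run. Trying to repair this by making $0$ the upper barrier would require $\mathcal F(x,0)\ge0$, i.e.\ $g_1\ge g_0K(1-\e^{mu_0})\cdots$; but at infinity the two sides behave like $|x|^{-4}$ and $|x|^{-2-2bN}$, and for $bN<1$ the right side dominates, so this inequality fails. The paper's cure is to take $0$ as subsolution and the \emph{positive} function $-u_0^\delta$ as supersolution, producing $0\le v^\delta\le -u_0^\delta$ (hence $u=u_0+v\le0$), not $v\le0$.

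This correction also changes where the hypothesis ``$bN=1$ with at least two distinct centers'' enters. With $0\le v^\delta\le -u_0^\delta$, the factor $\e^{bv^\delta}$ is only bounded by $\e^{-bu_0^\delta}$, so the smoothness of $K$ does not by itself give a $\delta$-uniform bound on $\mathcal F$; one falls back on the crude estimate $\e^{b(u_0^\delta+v^\delta)}F_\delta\le F_\delta\le\prod_s|x-p_s|^{-2b}$. The passage $\delta\to0$ then needs this singular weight to lie in $L^p_{\mathrm{loc}}$ for some $p>1$, which holds when $bN<1$, or when $bN=1$ and the $p_s$ are not all equal (so that the worst local exponent is $<2$). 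Thus the hypothesis is used for a \emph{local} integrability reason at the string centers, not to balance decay rates at infinity.

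Finally, your closing remark that the radial fixed-point argument ``only closes for $bN<1$'' is incorrect: the paper's companion theorem treats exactly the case $bN=1$ with all $p_s$ coinciding by that ODE/contraction method (at the cost of fixing $g_0$).
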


Define the background functions
\bea
u_0&=&\sum_{s=1}^N\ln\left(\frac{|x-p_s|^2}{1+|x-p_s|^2}\right),\label{2.1}\\
w_0&=&\sum_{s=1}^N\ln\left(1+|x-p_s|^2\right).\label{2.2}
\eea
It is clearly that $u_0<0, w_0\geq 0$, and
\bea\label{2.3}
\triangle u_0&=&4\pi\sum_{s=1}^N \delta_{p_s}-\triangle w_0\nn\\
&=&4\pi\sum_{s=1}^N \delta_{p_s}-g,
\eea
where
\be\label{2.4}
g=\triangle w_0=4\sum_{s=1}^N\frac{1}{(1+|x-p_s|^2)^2}>0.
\ee

Let $u=u_0+v$, then \eqref{1.31} becomes
\be\label{2.5}
\triangle v=g_0\frac{(\e^{m(u_0+v)}-1)\e^{8\pi G (u_0+v)}}{\e^{\frac{8\pi G}{m}\e^{m(u_0+v)}}}\left(\prod_{s=1}^N|x-p_s|^{-2}\right)^{8\pi G}+g.
\ee
In order to avoid some technical difficulties, we consider a regularized form of the above equation \eqref{2.5}
\be\label{2.6}
\triangle v=g_0\frac{(\e^{m(u_0^\delta+v)}-1)\e^{8\pi G (u_0^\delta+v)}}{\e^{\frac{8\pi G}{m}\e^{m(u_0^\delta+v)}}}F_\delta(x)+g,~~(0<\delta<1).
\ee
where
\bea
F_\delta(x)&=&\left(\prod_{s=1}^N(\delta+|x-p_s|^2)\right)^{-8\pi G},\nn\\
u_0^\delta&=&\sum_{s=1}^N\ln\left(\frac{\delta+|x-p_s|^2}{1+|x-p_s|^2}\right).\nn
\eea

\begin{lemma}\label{d.1}
The function $v_1^\delta=-u_0^\delta~(0<\delta<1)$ is a supersolution of \eqref{2.6}.
\end{lemma}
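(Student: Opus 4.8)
The plan is to check the defining differential inequality for a supersolution of \eqref{2.6} directly, taking advantage of a cancellation that is built into the choice $v_1^\delta=-u_0^\delta$. Recall that a function $\bar v$ is a supersolution of \eqref{2.6} when
\[
\triangle\bar v\le g_0\frac{(\e^{m(u_0^\delta+\bar v)}-1)\,\e^{8\pi G(u_0^\delta+\bar v)}}{\e^{\frac{8\pi G}{m}\e^{m(u_0^\delta+\bar v)}}}\,F_\delta(x)+g .
\]
First I would substitute $\bar v=v_1^\delta=-u_0^\delta$ and observe that then $u_0^\delta+v_1^\delta\equiv0$ on $\bfR^2$, so the factor $\e^{m(u_0^\delta+v_1^\delta)}-1$ vanishes identically and the whole first term on the right-hand side disappears. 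Hence the inequality to be verified collapses to $\triangle v_1^\delta\le g$, that is, $-\triangle u_0^\delta\le g$.

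The second step is an explicit Laplacian computation. Since $0<\delta<1$, the function $u_0^\delta=\sum_{s=1}^N\ln\!\left(\frac{\delta+|x-p_s|^2}{1+|x-p_s|^2}\right)$ is smooth on all of $\bfR^2$ --- the $\delta$-regularization has removed the logarithmic poles at the $p_s$ --- so its Laplacian is classical everywhere. Using the elementary identity $\triangle\ln(a+|x-p|^2)=\frac{4a}{(a+|x-p|^2)^2}$, valid for any constant $a>0$, together with the definition \eqref{2.4} of $g$, I obtain
\[
\triangle u_0^\delta=\sum_{s=1}^N\left(\frac{4\delta}{(\delta+|x-p_s|^2)^2}-\frac{4}{(1+|x-p_s|^2)^2}\right)=\sum_{s=1}^N\frac{4\delta}{(\delta+|x-p_s|^2)^2}-g .
\]
Therefore $-\triangle u_0^\delta=g-\sum_{s=1}^N\frac{4\delta}{(\delta+|x-p_s|^2)^2}\le g$, since every subtracted term is nonnegative; this is exactly the inequality required above, so $v_1^\delta$ is a supersolution of \eqref{2.6}.

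There is no real obstacle in this lemma; the points that merit a little care are (i) recording that the $\delta$-regularization genuinely makes $u_0^\delta$ smooth, so that no Dirac contributions sneak into $\triangle u_0^\delta$, and (ii) keeping the sign convention for ``supersolution'' consistent with the monotone (upper--lower solution) iteration that will be run afterwards. I would also note, for later use, that $v_1^\delta=-u_0^\delta>0$ and that the slack $\sum_{s=1}^N 4\delta/(\delta+|x-p_s|^2)^2$ tends to $0$ in $L^1_{\mathrm{loc}}(\bfR^2)$ as $\delta\to0$, which will be relevant when passing to the limit in the regularized problem.
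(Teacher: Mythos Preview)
Your proof is correct and follows essentially the same approach as the paper: observe that $u_0^\delta+v_1^\delta\equiv0$ kills the nonlinear term, then verify $-\triangle u_0^\delta\le g$ by a direct Laplacian computation. Your calculation of $\triangle u_0^\delta=\sum_{s=1}^N\frac{4\delta}{(\delta+|x-p_s|^2)^2}-g$ is in fact cleaner than the paper's displayed version (which carries a minor misprint in the denominator), but the argument and conclusion are identical.
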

\begin{proof}
In fact, we have
\bea
\triangle v_1^\delta&=&-u_0^\delta=-4\delta\sum_{s=1}^N\frac{1}{(1+|x-p_s|^2)^2}+g\nn\\
&<&g=g_0\frac{(\e^{m(u_0^\delta+v_1^\delta)}-1)\e^{8\pi G (u_0^\delta+v_1^\delta)}}{\e^{\frac{8\pi G}{m}\e^{m(u_0^\delta+v_1^\delta)}}}F_\delta(x)+g,~~(0<\delta<1).\nn
\eea
Thus, $v_1^\delta$ is a supersolution as expected.
\end{proof}

\begin{lemma}\label{d.2}
If $8\pi G N\leq 1$, then there a constant $C_0>0$ independent of $0<\delta<\frac 12$ (say) so that
\be\label{2.7}
0>g_0\frac{(\e^{m u_0^\delta}-1)\e^{8\pi G u_0^\delta}}{\e^{\frac{8\pi G}{m}\e^{m u_0^\delta}}}F_\delta(x)+g,~~\text{in}~\mathbb{R}^2
\ee
hold whenever $g_0>C_0$. In other worlds, $v=0$ is a subsolution of \eqref{2.6} for all $\delta$.
\end{lemma}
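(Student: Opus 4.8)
The plan is to produce the desired lower bound by estimating the sign and magnitude of the competing terms in \eqref{2.7} uniformly in $\delta\in(0,\tfrac12)$. First I would observe that $u_0^\delta<0$ on all of $\mathbb{R}^2$ for $\delta<1$, so that $\e^{mu_0^\delta}-1<0$; hence the first term on the right-hand side of \eqref{2.7} is manifestly negative for every $g_0>0$, and the entire content of the lemma is that this negative term dominates the positive term $g$ pointwise once $g_0$ is large enough. Thus it suffices to show that the function
\[
Q_\delta(x)=\frac{(1-\e^{mu_0^\delta})\,\e^{8\pi G u_0^\delta}}{\e^{\frac{8\pi G}{m}\e^{mu_0^\delta}}}\,F_\delta(x)
\]
admits a positive lower bound $c_0>0$ independent of $\delta$, times $g$; i.e. $Q_\delta(x)\ge c_0\, g(x)$ on $\mathbb{R}^2$. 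Then choosing $C_0=1/c_0$ finishes the proof, since $g_0 Q_\delta\ge g_0 c_0 g>g$ when $g_0>C_0$.

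The heart of the matter is a careful asymptotic/uniform analysis of $Q_\delta$ near the string centers $p_s$ and near infinity, which is exactly where $\delta$-uniformity could fail. Near a center $p_s$: as $x\to p_s$, we have $u_0^\delta\to \ln(\delta/(1+\cdots))$, which stays bounded below by a $\delta$-dependent but, crucially, we only need it bounded on the region where $g$ is not small; the factors $\e^{8\pi G u_0^\delty}$ and $\e^{-\frac{8\pi G}{m}\e^{mu_0^\delta}}$ are both bounded between positive constants there, while $1-\e^{mu_0^\delta}$ stays positive, and $F_\delta(x)=\bigl(\prod_s(\delta+|x-p_s|^2)\bigr)^{-8\pi G}$ is, for $\delta<1$, bounded below by $\bigl(\prod_s(1+|x-p_s|^2)\bigr)^{-8\pi G}$ uniformly in $\delta$ — a clean monotone bound. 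Near infinity: $u_0^\delta\to 0$, so $1-\e^{mu_0^\delta}\sim -m u_0^\delta\sim m\sum_s|x-p_s|^{-2}\asymp |x|^{-2}$, while $\e^{8\pi G u_0^\delta}\to 1$, $\e^{-\frac{8\pi G}{m}\e^{mu_0^\delta}}\to \e^{-8\pi G/m}$, and $F_\delta(x)\asymp |x|^{-16\pi G N}$. Meanwhile $g(x)\asymp |x|^{-4}$. Therefore $Q_\delta(x)/g(x)\asymp |x|^{2-16\pi G N}$, which is bounded below near infinity precisely when $16\pi G N\le 2$, i.e. $8\pi G N\le 1$ — this is where the hypothesis enters. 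The exponents in $Q_\delta$ and $F_\delta$ are independent of $\delta$, so all these comparisons can be made $\delta$-uniform with a little care.

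I would then assemble these local estimates into a global one: split $\mathbb{R}^2$ into a large compact set $K$ containing all the $p_s$ and its complement. On $K\setminus\{p_1,\dots,p_N\}$, all the exponential factors of $Q_\delta$ are bounded above and below by positive constants uniform in $\delta$, $F_\delta\ge\bigl(\prod_s(1+|x-p_s|^2)\bigr)^{-8\pi G}$ uniformly, and $1-\e^{mu_0^\delta}$ is bounded below by a positive constant away from infinity (noting $u_0^\delta$ is bounded away from $0$ on compact sets and below by a $\delta$-uniform bound since $\delta+|x-p_s|^2\ge|x-p_s|^2$... more precisely using $u_0^\delta\ge u_0$ so $\e^{mu_0^\delta}\ge\e^{mu_0}$ — wait, that is the wrong direction, so instead use that on $K$ the quantity $1-\e^{mu_0^\delta}$ is continuous in $(x,\delta)$ on a compact set and vanishes only at the $p_s$ where $g>0$ is largest, matching the blow-up structure); meanwhile $g$ is bounded above on $K$. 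Hence $Q_\delta\ge c_1 g$ on $K$. On $\mathbb{R}^2\setminus K$, the asymptotic computation above gives $Q_\delta\ge c_2 g$ provided $8\pi G N\le 1$ and $K$ is chosen large enough (so that the $\asymp$ become genuine two-sided bounds with $\delta$-independent constants, using that the error terms are $\delta$-uniform). Taking $c_0=\min\{c_1,c_2\}$ and $C_0=1/c_0$ yields \eqref{2.7}, and since $v=0$ makes the left side of \eqref{2.6} equal to $0$ while the right side is negative by \eqref{2.7} with $u_0^\delta+0=u_0^\delta$, $v=0$ is indeed a subsolution.

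The main obstacle I anticipate is the $\delta$-uniformity of the lower bound on $1-\e^{mu_0^\delta}$ (equivalently, on $-u_0^\delta$) near the string centers, since $u_0^\delta\to\ln\delta+O(1)$ there and naively degenerates as $\delta\to0$; the resolution is that this degeneration is harmless because it is exactly compensated by the simultaneous blow-up of $F_\delta$ (which $\to\infty$ like $\delta^{-8\pi G}$ near $p_s$) and because $g$ stays bounded — so one should not bound the factors separately near $p_s$ but rather bound the product $Q_\delta/g$, perhaps most cleanly by writing everything in terms of the single variable $t=u_0^\delta(x)\in(-\infty,0)$ and checking that $\frac{(1-\e^{mt})\e^{8\pi Gt}}{\e^{(8\pi G/m)\e^{mt}}}$ times the relevant power of distance is bounded below, with the distance-to-$p_s$ and $t$ linked through the explicit formula for $u_0^\delta$.
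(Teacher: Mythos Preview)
Your overall strategy matches the paper's: split $\mathbb{R}^2$ into a large ball containing the $p_s$ and its complement, handle the exterior by comparing the $r^{-4}$ decay of $g$ with the $r^{-(2+16\pi GN)}$ decay of $Q_\delta$, and treat the compact piece by direct bounds. Your identification of where the hypothesis $8\pi GN\le 1$ enters is exactly right.

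The gap is in the compact-set estimate. Your claim that ``the factors $\e^{8\pi G u_0^\delta}$ and $\e^{-\frac{8\pi G}{m}\e^{mu_0^\delta}}$ are both bounded between positive constants'' near the centers is false for the first factor: at $x=p_s$ one has $u_0^\delta(p_s)=\ln\delta+O(1)$, so $\e^{8\pi G u_0^\delta}\sim\delta^{8\pi G}\to 0$ as $\delta\to 0$. You sense that the blow-up of $F_\delta$ compensates, but never make this precise. The clean fix (and what the paper does) is the exact algebraic identity
\[
\e^{8\pi G u_0^\delta}\,F_\delta(x)
=\left(\prod_{s=1}^N\frac{\delta+|x-p_s|^2}{1+|x-p_s|^2}\right)^{8\pi G}\left(\prod_{s=1}^N(\delta+|x-p_s|^2)\right)^{-8\pi G}
=\prod_{s=1}^N(1+|x-p_s|^2)^{-8\pi G},
\]
which is manifestly $\delta$-independent and bounded below on any compact set. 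For the factor $1-\e^{mu_0^\delta}$ you try $u_0^\delta\ge u_0$ (wrong direction, as you note) and then gesture at a joint-continuity argument. The paper instead uses monotonicity in $\delta$ the \emph{other} way: since $\delta\mapsto u_0^\delta$ is increasing, $u_0^\delta<u_0^{1/2}$ for every $\delta<\tfrac12$, hence
\[
1-\e^{mu_0^\delta}\;>\;1-\e^{mu_0^{1/2}}\;>\;0
\]
on the compact set, the right-hand side being a fixed strictly positive continuous function. Combined with $\e^{-\frac{8\pi G}{m}\e^{mu_0^\delta}}\ge \e^{-8\pi G/m}$ (since $\e^{mu_0^\delta}\le 1$) and the identity above, the compact-set bound $Q_\delta\ge c_1$ on $K$ follows immediately; since $g$ is bounded above on $K$, this gives $Q_\delta\ge c_1' g$ there. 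With these two observations your outline becomes a complete proof.
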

\begin{proof}
We may rewrite $u_0^\delta$ as
\be\label{2.8}
u_0^\delta=-\sum_{s=1}^N\ln\left(1+\frac{1-\delta}{\delta+|x-p_s|^2}\right).
\ee
The expression \eqref{2.8} implies that $u_0^\delta\rightarrow 0$ uniformly as $r=|x|\rightarrow\infty$. Thus $\e^{mu_0^\delta}\rightarrow 1$ uniformly as $r\rightarrow\infty$. Note that
\be\nn
\e^{m u_0^\delta}-1=\e^{\xi(m u_0^\delta)m u_0^\delta}m u_0^\delta,~~(0<\xi(m u_0^\delta)<1).
\ee
Thus $\e^{\xi(m u_0^\delta)m u_0^\delta}\rightarrow 1$ uniformly as $r\rightarrow\infty$. Hence
\bea
&&\frac{(\e^{m u_0^\delta}-1)\e^{8\pi G u_0^\delta}}{\e^{\frac{8\pi G}{m}\e^{m u_0^\delta}}}F_\delta(x)\nn\\
=&&\frac{\e^{8\pi G u_0^\delta}}{\e^{\frac{8\pi G}{m}\e^{m u_0^\delta}}}\e^{\xi(m u_0^\delta)m u_0^\delta}m u_0^\delta F_\delta(x)\nn\\
=&&-\frac{m\e^{8\pi G u_0^\delta}\e^{\xi(m u_0^\delta)m u_0^\delta}}{\e^{\frac{8\pi G}{m}\e^{m u_0^\delta}}}\left(\sum_{s=1}^N\ln\left(1+\frac{1-\delta}{\delta+|x-p_s|^2}\right)\right)\left(\prod_{s=1}^N(\delta+|x-p_s|^2)\right)^{-8\pi G}\nn\\
=&&-\frac{m\e^{8\pi G u_0^\delta}\e^{\xi(m u_0^\delta)m u_0^\delta}}{\e^{\frac{8\pi G}{m}\e^{m u_0^\delta}}}\left(\sum_{s=1}^N \frac{1}{1+\xi_s}\cdot\frac{1-\delta}{\delta+|x-p_s|^2}\right)\left(\prod_{s=1}^N(\delta+|x-p_s|^2)\right)^{-8\pi G}\nn\\
\equiv && -h_\delta(x),\nn
\eea
where $\xi_s=\theta_s\cdot\left(\frac{1-\delta}{\delta+|x-p_s|^2}\right),~~0<\theta_s<1,~~~s=1, 2, \cdots, N$.

Note that the assumption $8\pi G N\leq 1$, we have
\be\nn
r^4 h_\delta(x)\rightarrow\infty~~\text{uniformly}~~~(\text{if}~8\pi G N< 1)~~~\text{as}~~r=|x|\rightarrow\infty
\ee
or
\be\nn
r^4 h_\delta(x)\rightarrow \text{some number}~c_\delta>0~~~(\text{if}~8\pi G N= 1)~~~\text{as}~~r=|x|\rightarrow\infty
\ee
for $0<\delta<\frac12$. Therefore, we can find $C_0, r_0$ so that
\be\label{2.9}
-\frac{1}{r^4}(g_0r^4h_\delta-r^4g)<0,~~\forall r=|x|\geq r_0,~~g_0\geq C_0.
\ee
This shows that \eqref{2.7} holds for $r=|x|\geq r_0, g_0\geq C_0$.

Now we choose $r_0$ sufficiently large, such that
\be\nn
\{p_1, p_2, \cdots, p_N\}\subset\left\{x\mid|x|<r_0\right\}\equiv D.
\ee
We see by the definition of $u_0^\delta$ that $u_0^\delta<u_0^{\frac12}~(\delta\leq\frac 12)$, we find
\be\nn
\e^{u_0^\delta}-1\leq\e^{u_0^{\frac 12}}-1,
\ee
\be\nn
\frac{\e^{8\pi G u_0^\delta}}{\e^{\frac{8\pi G}{m}\e^{m u_0^\delta}}}F_\delta\geq\e^{-\frac{8\pi G}{m}}\left(\prod_{s=1}^N(1+|x-p_s|^2)\right)^{-8\pi G}
\ee
Hence, choosing $C_0$ sufficiently large, we see that \eqref{2.7} holds on $D$ as well.
\end{proof}

Since $v_1^\delta$ is a supersolution and $v=0$ is a subsolution, base on Ni \cite{Ni}, we see that \eqref{2.6} has a smooth solution $v^\delta$ in $\mathbb{R}^2$ satisfying
\be\label{2.10}
-u_0^\delta=v_1^\delta>v^\delta\geq 0,~~~~\text{in}~\mathbb{R}^2.
\ee

We are now going to study the behavior of the family $\{v^\delta\}$ as $\delta\rightarrow 0$.

In view of the definition of $u_0^\delta$, we know that $u_0^{\delta_1}<u_0^{\delta_2}$ for $\delta_1<\delta_2$. Thus $v_1^{\delta_1}>v_1^{\delta_2}$. Particularly, $v_0\equiv-u_0=v_1^0>v_1^\delta$ for all $\delta>0$. Therefore a weaker form of \eqref{2.10} is
\be\label{2.11}
v_0>v^\delta\geq 0,~~~~\text{in}~\mathbb{R}^2.
\ee

Considering the right--hand side of \eqref{2.6}. By the relation of \eqref{2.10}, we have
\be\label{2.12}
\e^{-\frac{8\pi G}{m}\e^{m(u_0^\delta+v^\delta)}}\left(\e^{m(u_0^\delta+v^\delta)}-1\right)<\e^{mv_0}-1.
\ee
Besides, note \eqref{2.11} we get the following bound
\be\label{2.13}
\e^{8\pi G (u_0^\delta+v^\delta)}F_\delta(x)\leq F_\delta(x)<\prod_{s=1}^N|x-p_s|^{-16\pi G}\triangleq f.
\ee
So the right--hand side of \eqref{2.6} has $\delta$--independent upper bound but get singularity at $x=p_s, s=1, \cdots, N$. In order to control the sequence $\{v^\delta\}$ with a $L^p$--estimates, we require
\be\label{2.14}
f\in L_{loc}^p(\mathbb{R}^2)~~~~\text{for some}~p>1.
\ee
By \eqref{2.13}, we see that $8\pi G N<1$ under the require of \eqref{2.14}. When $8\pi G N=1$, we assume there are at least two centers of strings. Thus \eqref{2.14} is still hold. In other words, $8\pi G N\leq1$ is sufficient to ensure \eqref{2.14} holds. In conclusion, we know that, for any bounded domain $\mathcal{O}\in\mathbb{R}^2$, there is a $\delta$--independent constant $C(p, \mathcal{O})>0$ such that
\be\label{2.15}
||\triangle v^\delta||_{L^p(\mathcal{O})}\leq C(p, \mathcal{O}).
\ee
Then applying \eqref{2.11} and \eqref{2.15}, and using the interior $L^p$--estimates \cite{Ag,Be}, we get that
\be\label{2.16}
||v^\delta||_{W^{2,p}(\mathcal{O})}\leq C(p, \mathcal{O}).
\ee
Furthermore, according to the continuous embedding
\be\label{2.17}
W^{2,p}(\mathcal{O})\rightarrow C^q(\overline{\mathcal{O}}),~~~\text{for}~0\leq q<2-\frac 2 p,
\ee
we know that $\{v^\delta\}$ is bounded in $C(\overline{\mathcal{O}})$. Combining this fact with \eqref{2.11}, we obtain that $\{v^\delta\}$ is uniformly bounded over the full $\mathbb{R}^2$.

Therefore, in view of \eqref{2.6}, the sequence $\{\triangle v^\delta\}$ is also bounded in $\mathbb{R}^2$. So \eqref{2.16} holds for any $p>1$ and any given bounded domain $\mathcal{O}$ by the interior $L^p$--estimates. Taking $p>2$ in \eqref{2.17}, we have the bound
\be\label{2.18}
|v^\delta|_{C^1(\overline{\mathcal{O}})}\leq C(p, \mathcal{O}),~~\forall \delta>0.
\ee
We see in view of \eqref{2.6} and \eqref{2.18} that
\be\label{2.19}
|\triangle v^\delta|_{C^1(\overline{\mathcal{O}})}\leq C(p, \mathcal{O}),~~\forall \delta>0.
\ee
For the arbitrary of $\mathcal{O}$, \eqref{2.19} and the interior Schauder estimates enable us to conclusion that there is a constant $C(\alpha,p, \mathcal{O})$ independent of $\delta>0$ such that
\be\label{2.20}
|v^\delta|_{C^{2,\alpha}(\overline{\mathcal{O}})}\leq C(\alpha, p, \mathcal{O}),~~\alpha\in(0, 1).
\ee

The above results allow us to use a standard diagonal subsequence argument to show that there is a solution of \eqref{2.5} on $\mathbb{R}^2$ in the limit $\delta\to 0$.

Given a sequence of positive numbers $\{r_i\}$ satisfy $0<r_1<r_2<\cdots<r_i<\cdots$, and $r_i\to\infty$ as $i\to\infty$. Set
\be\nn
B_i=\{x\in\mathbb{R}^2\ \mid |x|<r_i\}.
\ee
Let $\mathcal{O}=B_i(i=1,2,\cdots)$ in \eqref{2.20} and use the compact embedding $C^{2,\alpha}(\overline{B_i})\to C^2(\overline{B_i})$, we see that for each $B_i$, there is a convergent subsequence of $\{v^\delta\}$ in $C^2(\overline{B_i})$. Starting from $B_1$. We can choose $\{\delta_n^1\}$, $\delta_n^1\to\infty$ as $n\to\infty$ and $v_1\to C^2(\overline{B_1})$ satisfying $v^{\delta_n^1}\to v_1$ in $C^2(\overline{B_1})$ as $n\to\infty$. Then there is a subsequence $\{\delta_n^2\}$ of $\{\delta_n^1\}$ and an element $v_2\to C^2(\overline{B_2})$, such that $\delta_n^2\to 0$ and $v^{\delta_n^2}\to v_2$ in $C^2(\overline{B_2})$ as $n\to\infty$. Clearly, $v_1=v_2$ in $B_1$. Repeating the above procedure we finally get the sequences $\{\delta_n^i\}, i=1,2,\cdots$ satisfying

(i) $\{\delta_n^i\}\subset\{\delta_n^{i-1}\}, i=2,3,\cdots$;

(ii) for any fixed $i=1,2,\cdots, \{\delta_n^i\}\to 0$ as $n\to\infty$;

(iii) for any fixed $i=1,2,\cdots$, there exists an element $v_i\in C^2(\overline{B_i})$ such that $v^{\delta_n^i}\to v_i$ as $n\to\infty$;

(iv) there holds $v_i=v_{i-1}$ on $B_i, i=2,3,\cdots$.

Set $v(x)=v_i(x)$ for $x\in B_i$ and $i=1,2,\cdots$. The property (iv) tells us that $v\in C^2(\mathbb{R}^2)$. In view of (i) and (iii), we see that $v^{\delta_n^n}$ convergence to $v$ as $n\to\infty$ in $C^2(\overline{\mathcal{O}})$--norm for any given bounded domain $\mathcal{O}$ in $\mathbb{R}^2$. Rewrite \eqref{2.6} with $\delta=\delta_n^n$ and letting $n\to\infty$ we see that $v$ is a smooth solution of \eqref{2.5}. Besides, the inequality \eqref{2.11} says that $v$ satisfies the same bounds
\be\label{2.21}
0\leq v\leq -u_0,~~~\text{in}~\mathbb{R}^2,
\ee
which implies $v$ vanishes at infinity. Consequently, the proof of Theorem \ref{th2.1} is complete.

In the following we deal with the case $8\pi GN=1$ and all the centers of strings are identity which are not covered in the condition stated in Theorem \ref{th2.1}. It makes us to find a radially symmetric solution sufficiently.

\begin{theorem}\label{th2.2}
Under the condition $8\pi G N=1$ with $N\geq1$, if all the points $p_s$ are identical, that is to say, $p_s=p_0, s=1,2,\cdots, N$. Then the equation \eqref{1.31} has a solution that vanishes at infinity with the choice $g_0=2^{16\pi G}N$ and the solution is symmetry about $p_0$.
\end{theorem}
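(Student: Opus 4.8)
The plan is to exploit the coincidence of all centers to look for a radially symmetric solution, reduce \eqref{1.31} to an ODE, and solve it by a fixed point argument after a substitution that --- precisely because $8\pi GN=1$ --- removes the singular weight. Translating $p_0$ to the origin, the hypothesis $8\pi GN=1$ collapses the weight, $\big(\prod_{s=1}^N|x-p_s|^{-2}\big)^{8\pi G}=|x|^{-16\pi GN}=|x|^{-2}$, so \eqref{1.31} becomes
\be\nn
\triangle u=g_0(\e^{mu}-1)\,\e^{u/N}\,\e^{-\frac{1}{mN}\e^{mu}}\,|x|^{-2}+4\pi N\delta_{0}.
\ee
Write $u=u_0+v$, where $u_0$ is the background function \eqref{2.1}, here equal to $N\ln\frac{|x|^2}{1+|x|^2}$: then $\triangle u_0=4\pi N\delta_0-\frac{4N}{(1+|x|^2)^2}$, $u_0<0$, $u_0\to0$ at infinity, and one has the cancellation $|x|^{-2}\e^{u_0/N}=(1+|x|^2)^{-1}$. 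Hence $v$ must solve the everywhere-regular radial equation
\be\nn
\triangle v=g_0\big(\e^{m(u_0+v)}-1\big)\,\frac{\e^{v/N}}{1+|x|^2}\,\e^{-\frac{1}{mN}\e^{m(u_0+v)}}+\frac{4N}{(1+|x|^2)^2}\equiv b(|x|,v),
\ee
and we seek a bounded radial $v$ with $0\le v\le -u_0$; then $u=u_0+v$ is radial about $p_0$, lies in $[u_0,0]$, and, being squeezed between $0$ and $-u_0\to0$, vanishes at infinity. Note that with $g_0$ prescribed rather than large, $v\equiv0$ need no longer be a subsolution, which is why the sub/supersolution scheme of Theorem \ref{th2.1} must be replaced here.

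A bounded radial solution with $v'(0)=0$ and $v(\infty)=0$ is equivalent to the integral equation
\be\nn
v(r)=-\int_r^\infty\frac{1}{\rho}\Big(\int_0^\rho s\,b(s,v(s))\,\dd s\Big)\dd\rho\equiv(\mathcal{T}v)(r),
\ee
and I would solve $v=\mathcal{T}v$ by Schauder's fixed point theorem on a closed bounded convex set $\mathcal{K}\subset C([0,\infty))$ of functions satisfying $0\le v\le -u_0$ together with a uniform bound near the origin (needed so that $b(\cdot,v)$ stays bounded there). One checks $\mathcal{T}(\mathcal{K})\subseteq\mathcal{K}$, that $\mathcal{T}$ is continuous, and --- since the double integration smooths $b$, while $0\le v\le -u_0$ forces $|b(\cdot,v)|=O(|x|^{-4})$ at infinity --- that $\mathcal{T}(\mathcal{K})$ is equicontinuous with uniformly small tails, so $\mathcal{T}$ is compact. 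As an alternative one may solve the Dirichlet problems for the $v$-equation on the balls $B_R$ by a fixed point argument, obtain $R$-independent $W^{2,p}$ and Schauder estimates from the bound $0\le v\le -u_0$, and pass to the limit $R\to\infty$ by the diagonal argument already used in the proof of Theorem \ref{th2.1}. Undoing the substitution then gives the asserted solution of \eqref{1.31}, radial about $p_0$ and vanishing at infinity, with $g_0=2^{16\pi G}N$.

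I expect the main obstacle to be exactly the self-mapping property, i.e. the two-sided a priori bound $0\le v\le -u_0$, and this is where the precise value $g_0=2^{16\pi G}N$ is forced. Because the case is critical, $\int_0^\rho s\,b(s,v)\,\dd s\to\frac{1}{2\pi}\int_{\mathbb{R}^2}b(\cdot,v)$ as $\rho\to\infty$, and for $\mathcal{T}v$ to stay bounded (rather than grow like $\ln r$) this mass must vanish; it is the balance between the positive mass $4\pi N$ carried by the term $\frac{4N}{(1+|x|^2)^2}$ and the negative contribution of the first term --- which is $g_0$ times a quantity controlled by $u_0$ and $v$ --- that pins $g_0$ down. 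Establishing this balance, together with the comparison $v\le -u_0$ near the origin where $u_0\to-\infty$, is the delicate step; once it is in hand, continuity and compactness of $\mathcal{T}$ are routine and Schauder's theorem finishes the proof. Should the self-mapping estimate prove too rigid, a shooting argument is available as a substitute: integrate the $v$-equation outward from $r=0$ with $v(0)=\gamma$, show the orbit remains in the band $0\le v\le -u_0$ and depends continuously on $\gamma$, and select $\gamma$ with $v(\infty)=0$ by the intermediate value theorem, the value $g_0=2^{16\pi G}N$ again being what makes that target attainable.
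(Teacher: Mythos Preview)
Your route is genuinely different from the paper's, and the step you yourself flag as ``the delicate step'' is where it breaks. The paper does \emph{not} subtract the background $u_0$; instead it works directly with $u$, sets $t=\ln r$, $U(t)=u(\e^t)$, and uses $8\pi GN=1$ to see that the weight $r^{-2aN}$ exactly cancels the $\e^{2t}$ coming from the radial Laplacian, leaving the \emph{autonomous} equation
\[
U''=g_0\,\e^{aU-\frac{a}{m}\e^{mU}}(\e^{mU}-1)\equiv h(U),\qquad U'(-\infty)=2N.
\]
Because $h(U)=-\frac{g_0}{a}\frac{\dd}{\dd U}\e^{a(U-\frac1m\e^{mU})}$, one multiplies by $U'$ and integrates to get a first integral
\[
(U')^2=4N^2-\frac{2g_0}{a}\e^{a(U-\frac1m\e^{mU})}\equiv F(U).
\]
The value of $g_0$ is then fixed by the requirement that the desired limit $\hat U=0$ be an equilibrium of this first-order flow, i.e.\ $F(0)=0$, which gives $g_0=2aN^2\e^{a/m}=2N\e^{1/(mN)}$. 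A contraction mapping is used only to produce the local solution near $t=-\infty$; the global statement $U(t)\nearrow 0$ as $t\to\infty$ follows from monotonicity of $F$ on $(-\infty,0)$ and the integral $\int \dd U/\sqrt{F(U)}=t$.

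Your Schauder scheme loses this structure: once you subtract $u_0$ the equation is no longer autonomous, there is no first integral, and the operator $\mathcal{T}$ is not even defined on a general $v\in\mathcal{K}$. For $\mathcal{T}v$ to be finite you need $\int_0^\infty s\,b(s,v)\,\dd s=0$, but this quantity depends on the particular $v$, not only on $g_0$; no single choice of $g_0$ makes it vanish for every $v$ with $0\le v\le -u_0$. So the ``mass balance'' does not pin down $g_0$ in the way you suggest, and the self-mapping $\mathcal{T}(\mathcal{K})\subset\mathcal{K}$ cannot be obtained along these lines. Your shooting fallback is closer in spirit to what actually works, but without the first integral you have no mechanism to show that some initial value hits $v(\infty)=0$; in the paper it is precisely the energy level $F(0)=0$ that replaces the shooting parameter and makes the heteroclinic connection automatic. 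The fix is to drop the background subtraction, pass to $t=\ln r$, and exploit the conserved quantity.
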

\begin{proof}

Without loss of generality, we assume that the single center of the $N$ strings: $p_s=p_0, s=1,2,\cdots, N$ is at the origin of $\mathbb{R}^2$. Set
\be\label{2.22}
a=8\pi G
\ee
and use $r=|x|$. Then \eqref{1.31} can be written as
\be\label{2.23}
\triangle u=g_0 r^{-2aN}\e^{au-\frac a m\e^{mu}}(\e^{mu}-1)+4\pi N\delta(x).
\ee
According to the radically symmetric of the solution, the equation \eqref{2.23} is reduced to a single one
\bea
u_{rr}+\frac 1 r u_r&=&g_0 r^{-2aN}\e^{au-\frac a m\e^{mu}}(\e^{mu}-1),\label{2.24}\\
\lim_{r\to 0}\frac{u(r)}{\ln r}&=&\lim_{r\to 0}r u_r=2N.\label{2.25}
\eea

We now give the new variables
\be\nn
t=\ln r,~~~~~U(t)=u(\e^t),
\ee
and use the condition $aN=1$, then the system \eqref{2.24}--\eqref{2.25} is equivalent to
\bea
U''&=&g_0\e^{a U-\frac a m \e^{m U}}(\e^{m U}-1),~~-\infty<t<\infty,\label{2.26}\\
\lim_{t\to -\infty}\frac{U(t)}{t}&=&\lim_{t\to -\infty}U'(t)=2N.\label{2.27}
\eea
Denote the right--hand side of the equation \eqref{2.26} as $h(U)$, then the integral form of system \eqref{2.26}--\eqref{2.27} is
\be\label{2.28}
U(t)=2Nt+\int_{-\infty}^t(t-\tau)h(U(\tau))\dd \tau,~~~t\in\mathbb{R}.
\ee
It is convenient to set $w=U-2Nt$, so we can rewrite \eqref{2.28} as
\be\label{2.29}
w(t)=\int_{-\infty}^t(t-\tau)h(2N\tau+w(\tau))\dd \tau,~~~t\in\mathbb{R}.
\ee
We are going to find a solution of the equation \eqref{2.29} which vanishes at $t=-\infty$. Denote the right--hand side of the equation \eqref{2.29} as $T(w)$, then we derive a fixed--point problem, $w=T(w)$. Given the following function space
\be\nn
\mathcal{X}=\left\{w\in C(-\infty, t_0]~\Big|~\lim_{t\to -\infty}w(t)=0,~~\sup_{t\leq t_0}|w(t)|\leq 1\right\},
\ee
where $-\infty<t_0<\infty$ is to be determined later. First, we show that $T$ maps from $\mathcal{X}$ into itself. In fact, for any $w\in\mathcal{X}$, we can choose $t_0$ properly to obtain
\bea\label{2.30}
|T(w)|&\leq& g_0\sup_{t\leq t_0}\int_{-\infty}^t|t-\tau|\e^{a(2N\tau+w(\tau))}\left|\e^{m(2N\tau+w(\tau))}-1\right|\dd \tau\nn\\
&\leq& g_0\sup_{t\leq t_0}\int_{-\infty}^t|t-\tau|\e^{a(2N\tau+1)}\dd \tau \leq 1.
\eea

Next, we show that $T$ is a contraction. For this purpose, in view of \eqref{2.26}, for any $U$ there holds
\be\label{2.31}
|h'(U)|=g_0\frac{\left|m\e^{(a+m)U}-a \e^{a U}(\e^{mU}-1)^2\right|}{\e^{\frac a m \e^{m U}}}\leq C_1 \e^{aU},
\ee
where $C_1$ is a positive constant depending on $g_0$ and $m$. Then, for any $w_1, w_2\in\mathcal{X}$, we have
\bea\label{2.32}
&&|T(w_1)-T(w_2)|\nn\\
=&&\left|\int_{-\infty}^t(t-\tau)h'\left(2N\tau+\widetilde{w}(\tau)\right)(w_1(\tau)-w_2(\tau))\dd \tau\right|\nn\\
\leq && C_1\sup_{t\leq t_0}|w_1(t)-w_2(t)|\int_{-\infty}^t(t_0-\tau)\e^{a(2N\tau+1)}\dd \tau,
\eea
where $\widetilde{w}(t)$ lies between $w_1(t)$ and $w_2(t)$. Therefore, when $t_0$ is properly chosen, $T: \mathcal{X}\to\mathcal{X}$ is a contraction. Consequently, we see that the system \eqref{2.26}--\eqref{2.27} has a negative solution in the neighborhood of $t=-\infty$ and $U(t)\to -\infty$ as $t\to -\infty$.

Besides, we can rewrite the right--hand side of \eqref{2.26} as
\be\label{2.33}
h(U)=g_0\e^{a(U-\frac 1 m \e^{m U})}(\e^{m U}-1)=-\frac{g_0}{a}\frac{\dd}{\dd U}\left[e^{a(U-\frac 1 m \e^{m U})}\right].
\ee
Then, multiplying the both side of \eqref{2.26} by $U'$ and integrating over $(-\infty, t)$, we have
\be\label{2.34}
(U')^2=4N^2-\frac{2g_0}{a}\e^{a(U-\frac 1 m \e^{m U})}\equiv F(U).
\ee
The critical point of the equation \eqref{2.34}, say $\hat{U}$ satisfies $F(\hat{U})=0$. In order to ensure uniqueness at the equilibrium $\hat{U}$, we are motivated to require that
\be\label{2.35}
F'(\hat{U})=2g_0 (\e^{m \hat{U}}-1)\e^{a(\hat{U}-\frac 1 m \e^{m \hat{U}})}=0.
\ee
Thus the only choice is  $\hat{U}=0$. As a consequence, inserting this result into $F(\hat{U})=0$ and use the condition $a N=1$, we can determine the parameter $g_0$ and
\be\label{2.36}
g_0=2a N^2 \e^{\frac a m}=2 N\e^{\frac{1}{m N}}.
\ee
In view of \eqref{2.27}, in the neighborhood of $t=-\infty$, we can rewrite \eqref{2.34} in the following explicit form
\be\label{2.37}
U'(t)=2N\sqrt{1-\e^{\frac 1 N(\frac 1 m+U-\frac 1 m\e^{m U})}}.
\ee
Since $U'(t)>0$ in \eqref{2.37} and the equilibrium $\hat{U}=0$ is unique. According to that $F(U)$ decreases in $U<0$ we can derive $U=U(t)$ solves \eqref{2.37} for $-\infty<t<\infty$ and $U(t)<0=\hat{U}$ for all $t$.

Besides, \eqref{2.37} can be written in the integral form
\be\label{2.38}
\int_{U(0)}^{U(t)}\frac{\dd U}{\sqrt{F(U)}}=t.
\ee
Thus, we see $U(t)\to\hat{U}=0$ as $t\to\infty$. In fact, since
\be\nn
F''(0)=2g_0 m\e^{-\frac a m}=4N m\e^{\frac{1}{mN}-\frac a m}=4Nm.
\ee
In view of \eqref{2.38}, we have the sharp estimate
\be\label{2.39}
|U|=O(\e^{-\sqrt{2Nm}t}),~~~t\to\infty.
\ee

Returning to the original variable $r=\e^t, u(r)=U(\ln r)$, we obtain the solution of \eqref{1.31}, and the decay estimate
\be\label{2.40}
|u(r)|=O(r^{-\sqrt{2Nm}}),~~~r\to\infty.
\ee
Note that \eqref{2.37} implies
\be\nn
\frac{U'(t)}{U(t)}\to\sqrt{\frac{F''(0)}{2}}=\sqrt{2Nm},~~~t\to\infty,
\ee
therefore, we have from \eqref{2.40} that
\be\label{2.41}
|u_r(r)|=O(r^{-(1+\sqrt{2Nm})}),~~~r\to\infty.
\ee
The theorem \ref{th2.2} is thus proven.
\end{proof}

So far we have proved the existence of cosmic string solutions. In the following, we will give the asymptotic properties for multistring solutions.

Choosing $r_0>0$ sufficiently large, such that
\be\nn
\{p_1, p_2, \cdots, p_N\}\subset B(r_0)=\{x\in\mathbb{R}^2 \mid |x|<r_0\},
\ee
then the equation \eqref{1.31} becomes
\be\label{2.42}
\triangle u=g_0\e^{a(u-{\frac{1}{m}\e^{mu}})}(\e^{mu}-1)\prod_{s=1}^N|x-p_s|^{-2a},~~~x\in\mathbb{R}^2\backslash\overline{B(r_0)}.
\ee

\begin{lemma}\label{d.3}
Suppose $a N<1$, then the solution of \eqref{2.42} holds the bound
\be\label{2.43}
-C_b|x|^{-b}\leq u(x)\leq 0,~~~|x|>r_0,
\ee
where $b$ is any positive constant and $C_b>0$ is a suitable constant depends on $b$. If $a N=1$ and there are at least two distant string centers, then \eqref{2.43} holds for $b=2$. If $a N=1$ and all the string centers are coinciding, then \eqref{2.43} holds for $b=\sqrt{2 N m}$.
\end{lemma}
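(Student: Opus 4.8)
\emph{Proof plan.} The upper bound $u\le 0$ in \eqref{2.43} is already in hand: in the multi--center construction of Theorem~\ref{th2.1} one has $u=u_0+v$ with $0\le v\le-u_0$ and $u_0<0$, so $u_0\le u\le 0$, while in the coinciding case $U(t)<0$ for all $t$; moreover $u$ is bounded on $\{|x|>r_0\}$ and $u\to 0$ at infinity. So I would only need the lower bound, and I would first dispose of the two borderline situations $aN=1$. When there are at least two distinct centers, $u\ge u_0$ and, for $|x|$ large, $u_0(x)=-\sum_{s}\ln\!\big(1+|x-p_s|^{-2}\big)\ge-\sum_{s}|x-p_s|^{-2}\ge -C|x|^{-2}$; combined with the boundedness of $u$ on the compact set $r_0<|x|\le 2r_0$ this gives \eqref{2.43} with $b=2$. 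When all centers coincide, the sharp decay \eqref{2.40}, $|u(r)|=O(r^{-\sqrt{2Nm}})$ obtained in Theorem~\ref{th2.2}, is exactly \eqref{2.43} with $b=\sqrt{2Nm}$.

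For the remaining (and main) case $aN<1$ I would argue by a barrier/comparison method. Setting $w:=-u\ge 0$ and using the elementary bound $1-\e^{-mw}\ge mw\,\e^{-m\sup_{|x|>r_0}w}$, equation \eqref{2.42} gives on $\{|x|>r_0\}$ the linear differential inequality $\triangle w\ge\hat Q(x)\,w$, where $\hat Q(x):=c_m\,g_0\,\e^{a(u-\frac1m\e^{mu})}\prod_{s=1}^N|x-p_s|^{-2a}\ge 0$ and $c_m>0$ depends only on $m$ and on $\sup_{|x|>r_0}w<\infty$. Since $u$ is bounded with $u\to 0$, near infinity the factor $\e^{a(u-\frac1m\e^{mu})}$ lies between two positive constants and $\prod_{s}|x-p_s|^{-2a}$ is bounded above and below by positive multiples of $|x|^{-2aN}$; hence $\hat Q$ is bounded on $\{|x|>r_1\}$ and $\hat Q(x)\ge\beta|x|^{-2aN}$ there for some $\beta>0$. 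Now fix $b>0$. Because $2-2aN>0$, I can choose $r_2\ge r_1$ with $b^2\le\beta|x|^{2-2aN}$ for $|x|\ge r_2$; then $\phi(x):=A|x|^{-b}$ satisfies $\triangle\phi-\hat Q\phi=A|x|^{-b}\big(b^2|x|^{-2}-\hat Q(x)\big)\le 0$ on $\{|x|\ge r_2\}$, i.e.\ $\phi$ is a supersolution of $\triangle\psi-\hat Q\psi=0$, and choosing $A:=r_2^{\,b}\sup_{|x|=r_2}w$ makes $\phi\ge w$ on $|x|=r_2$. Since $w,\phi\to 0$ at infinity and the zeroth--order coefficient $-\hat Q$ is $\le 0$, the maximum principle on $\{|x|>r_2\}$---applied on annuli $\{r_2<|x|<R\}$ with $R\to\infty$, the decay at infinity replacing the outer boundary datum---yields $w\le\phi$, that is $u\ge -A|x|^{-b}$ for $|x|\ge r_2$; on $r_0<|x|\le r_2$ the bound $w\le\big(r_2^{\,b}\sup_{r_0<|x|\le r_2}w\big)|x|^{-b}$ is automatic, so \eqref{2.43} holds for $|x|>r_0$ with a suitable $C_b>0$. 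As $b>0$ was arbitrary, this finishes $aN<1$.

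I expect the delicate point to be exactly this comparison step: one must first turn the exponential nonlinearity of \eqref{2.42} into the linear inequality $\triangle w\ge\hat Q w$ equipped with a \emph{quantitatively useful} lower bound $\hat Q(x)\ge\beta|x|^{-2aN}$ valid for all large $|x|$ (this is where the already-available a priori decay $w\le-u_0\to 0$ is used), and then run the maximum principle on an unbounded exterior domain, with the decay of both $w$ and the barrier at infinity serving in place of a boundary condition. By contrast, the two $aN=1$ subcases demand no new analysis, since the trivial comparison $u\ge u_0$ (exponent $b=2$) and the sharp ODE asymptotics \eqref{2.40} (exponent $b=\sqrt{2Nm}$) already deliver the claimed rates.
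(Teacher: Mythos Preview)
Your proposal is correct and follows essentially the same route as the paper. In all three cases you do exactly what the paper does: for $aN=1$ with distinct centers you invoke $u\ge u_0$ together with $-u_0=O(r^{-2})$; for coinciding centers you cite the ODE estimate \eqref{2.40}; and for $aN<1$ you linearize $\e^{mu}-1$ via the mean value theorem, observe that the resulting zeroth--order coefficient behaves like $r^{-2aN}$ and hence dominates $b^2r^{-2}$ for large $r$, and then compare $u$ with the barrier $C|x|^{-b}$ by the maximum principle on the exterior domain. The only cosmetic difference is that the paper works with $u+w$ (where $w=C|x|^{-b}$) and the single inequality $\triangle(u+w)\le b^2r^{-2}(u+w)$, whereas you first pass to $-u$ and the inequality $\triangle(-u)\ge\hat Q(-u)$ before introducing the barrier; the content is the same.
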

\begin{proof}
First, we assume $a N<1$ and introduce the comparison function
\be\label{2.44}
w(x)=C|x|^{-b}.
\ee
Then
\be\label{2.45}
\triangle w=b^2r^{-2}w,~~~|x|=r>r_0.
\ee
In view of \eqref{2.42}, we have for $r_0>0$ sufficiently large that
\bea\label{2.46}
\triangle (u+w)&=&g_0\e^{a(u-{\frac{1}{m}\e^{mu}})}(\e^{mu}-1)\prod_{s=1}^N|x-p_s|^{-2a}+b^2r^{-2}w\nn\\
&=&g_0 m u \e^{a (u-\frac{1}{m}\e^{mu})}\e^{\xi m u}\prod_{s=1}^N|x-p_s|^{-2a}+b^2r^{-2}w\nn\\
&\leq &b^2r^{-2}(u+w),~~~|x|=r>r_0.
\eea
where $\xi\in [0, 1]$. For such fixed $r_0$, we may take $C>0$ in \eqref{2.44} large enough to make
\be\nn
(u(x)+w(x))\Big|_{|x|=r_0}\geq 0.
\ee
Hence, applying the maximum principle, we have $u\geq -w$ in $\mathbb{R}^2\backslash\overline{B(r_0)}$ as claimed.

If $a N=1$ and there are at least two distant string centers. By the definition of $u_0$, we know
\be\label{2.47}
-u_0(x)=\sum_{s=1}^N\ln\left(1+\frac{1}{|x-p_s|^2}\right)=O(r^{-2}),~~~|x|=r\to\infty,
\ee
combining with \eqref{2.21}, we conclude that
\be\label{2.48}
-C_2|x|^{-2}\leq u\leq 0,~~~|x|\to\infty.
\ee
If $a N=1$ and all the string centers are coinciding, the estimate follow from \eqref{2.40}.
\end{proof}

\begin{lemma}\label{d.4}
Suppose $a N\leq 1$ and there are at least two distant string centers. Then $\pa_j v\to 0$ as $|x|\to\infty, j=1, 2$.
\end{lemma}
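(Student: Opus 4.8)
The plan is to upgrade the known decay of $v$ itself into decay of $\nabla v$ by rescaling the equation on balls that recede to infinity and applying interior elliptic estimates. Write $a=8\pi G$ and recast \eqref{2.5} as $\triangle v=h$, where
\[
h(x)=g_0\,\frac{\big(\e^{m(u_0+v)}-1\big)\,\e^{a(u_0+v)}}{\e^{\frac a m\e^{m(u_0+v)}}}\Big(\prod_{s=1}^N|x-p_s|^{-2}\Big)^{a}+g(x).
\]
From \eqref{2.4}, $0<g(x)\le C(1+|x|)^{-4}$. From \eqref{2.21}, $0\le v\le-u_0$, hence $u_0\le u=u_0+v\le0$; since $-u_0(x)=\sum_{s=1}^N\ln\big(1+|x-p_s|^{-2}\big)\le\sum_{s=1}^N|x-p_s|^{-2}\le C(1+|x|)^{-2}$ for $|x|$ large, we get $0\le v(x)\le C(1+|x|)^{-2}$ and $|u(x)|\le C(1+|x|)^{-2}$ there; in particular $v(x)\to0$ as $|x|\to\infty$.

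Next I would bound $h$ near infinity. As $u\le0$, $\e^{au}\le1$ and $\e^{-\frac am\e^{mu}}\le1$; as $|u|=O(|x|^{-2})$, $|\e^{mu}-1|\le C|u|\le C|x|^{-2}$; and, fixing $r_1$ so large that $B(r_1)$ contains all the $p_s$, $\prod_{s=1}^N|x-p_s|^{-2a}\le Cr^{-2aN}$ for $|x|=r\ge r_1$. Since $aN\le1$ forces $4\ge2+2aN$, it follows that
\[
|h(x)|\le C\,r^{-2-2aN}+C\,r^{-4}\le C\,r^{-\mu},\qquad \mu:=2+2aN,
\]
for all $|x|=r\ge r_1$, and crucially $\mu>2$ because $aN>0$.

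Finally, fix $x$ with $r:=|x|\ge 2r_1$ and set $v_x(y)=v\big(x+\tfrac{r}{2}y\big)$ for $y\in B_1(0)$; this is smooth since $\overline{B_{r/2}(x)}\subset\mathbb{R}^2\setminus\overline{B(r_1)}$. Then $\triangle_y v_x(y)=\tfrac{r^2}{4}h\big(x+\tfrac{r}{2}y\big)$, and since $\big|x+\tfrac{r}{2}y\big|\ge r/2$ on $B_1(0)$, for a fixed $p>2$,
\[
\|\triangle v_x\|_{L^p(B_1)}\le C\,r^2(r/2)^{-\mu}\le C\,r^{2-\mu},\qquad \|v_x\|_{L^p(B_1)}\le C\sup_{B_{r/2}(x)}v\le C\big(r/2-\max_{s}|p_s|\big)^{-2},
\]
and both right-hand sides tend to $0$ as $r\to\infty$. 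The interior $L^p$-estimate on $B_{1/2}\subset B_1$ followed by the embedding \eqref{2.17} then yields a universal constant $C$ with $\|v_x\|_{C^1(\overline{B_{1/2}})}\le C\big(\|v_x\|_{L^p(B_1)}+\|\triangle v_x\|_{L^p(B_1)}\big)$, so that
\[
|\nabla v(x)|=\tfrac{2}{r}\big|\nabla_y v_x(0)\big|\le\tfrac{2}{r}\,\|v_x\|_{C^1(\overline{B_{1/2}})}\le\tfrac{C}{r}\Big(\sup_{B_{r/2}(x)}v+r^{2-\mu}\Big)\longrightarrow0
\]
as $|x|=r\to\infty$; this proves $\pa_j v\to0$, $j=1,2$ (indeed $|\nabla v(x)|=o(|x|^{-1})$). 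The one step requiring genuine care is the decay bound for $h$ — namely, that the Higgs--magnetic term decays strictly faster than $|x|^{-2}$, which is exactly where $aN\le1$ and $aN>0$ are used; the rescaling and the interior $L^p$/Schauder estimates are the tools already deployed in the proof of Theorem \ref{th2.1}.
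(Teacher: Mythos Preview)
Your proof is correct, but it follows a genuinely different route from the paper's. The paper argues globally: it first shows $v\in L^2(\mathbb{R}^2)$ from \eqref{2.21} and \eqref{2.47}, and that the right-hand side of \eqref{2.5} lies in $L^2(\mathbb{R}^2)$ using \eqref{2.50}--\eqref{2.51}; a global $L^2$--estimate then gives $v\in W^{2,2}(\mathbb{R}^2)$. Next it \emph{differentiates} \eqref{2.5}, checks term by term that the right-hand side of the differentiated equation is again in $L^2(\mathbb{R}^2)$ (this is where \eqref{2.53}--\eqref{2.54} enter), and applies the $L^2$--estimate once more to obtain $\pa_j v\in W^{2,2}(\mathbb{R}^2)$, whence $\pa_j v\to0$ at infinity by the Sobolev embedding. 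Your approach never differentiates the equation and never invokes a global estimate: you instead establish pointwise decay $v=O(|x|^{-2})$ and $h=O(|x|^{-(2+2aN)})$ (the crucial exponent $>2$ coming from $aN>0$), then rescale onto the unit ball and apply the same interior $L^p$--estimate plus embedding \eqref{2.17} already used in Theorem~\ref{th2.1}. This is more elementary and yields the quantitative bound $|\nabla v(x)|=o(|x|^{-1})$ (in fact $O(|x|^{-3})+O(|x|^{1-2-2aN})$) directly, whereas the paper's route gives the additional information $\pa_j v\in W^{2,2}(\mathbb{R}^2)$, which it subsequently leverages in Lemma~\ref{d.5}. Note also that your argument does not explicitly use the ``two distinct centers'' hypothesis beyond invoking \eqref{2.21}; that hypothesis enters only to guarantee the existence of the solution constructed in Theorem~\ref{th2.1} when $aN=1$.
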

\begin{proof}
In view of \eqref{2.21} and \eqref{2.47}, we know $v\in L^2(\mathbb{R}^2)$. Besides, Note that
\bea\label{2.50}
0\leq 1-\e^{m(u_0+v)}&\leq&1-\e^{m u_0}\nn\\
&=&1-\prod_{s=1}^N\left(\frac{|x-p_s|^2}{1+|x-p_s|^2}\right)^m=O(r^{-2}),~~~|x|=r\to\infty
\eea
and
\be\label{2.51}
\prod_{s=1}^N|x-p_s|^{2a}=O(r^{-2aN}),~~~|x|=r\to\infty,
\ee
we see the right--hand side of \eqref{2.5} belongs to $L^2(\mathbb{R}^2)$ as well. Thus,  using the $L^2$--estimates for \eqref{2.5}, we arrive at $v\in W^{2, 2}(\mathbb{R}^2)$.

Differentiating \eqref{2.5}, we have
\bea\label{2.52}
\triangle(\pa_j v)=&&g_0\e^{a((u_0+v)-{\frac{1}{m}\e^{m(u_0+v)}})}\left(m\e^{m(u_0+v)}-a(\e^{m(u_0+v)}-1)^2\right)\left(\prod_{s=1}^N|x-p_s|^{-2a}\right)(\pa_j v)\nn\\
&&+\frac{g_0}{a}\e^{a(v-{\frac{1}{m}\e^{m(u_0+v)}})}\left(m\e^{m(u_0+v)}-a(\e^{m(u_0+v)}-1)^2\right)\left(\prod_{s=1}^N|x-p_s|^{-2a}\right)(\pa_j \e^{a u_0})\nn\\
&&+g_0\e^{a((u_0+v)-{\frac{1}{m}\e^{m(u_0+v)}})}(\e^{m(u_0+v)}-1)\pa_j\left(\prod_{s=1}^N|x-p_s|^{-2a}\right)+(\pa_j g).
\eea
Clearly, $\pa_j g\in L^{2}(\mathbb{R}^2)$. Because of
\be\label{2.53}
\pa_j\left(\prod_{s=1}^N|x-p_s|^{-2a}\right)=O(r^{-(2aN+1)})
\ee
and
\be\label{2.54}
\pa_j\e^{a u_0}=\pa_j\left(\prod_{s=1}^N\frac{|x-p_s|^2}{1+|x-p_s|^2}\right)^a=O(r^{-1}),
\ee
we see the right--hand side of \eqref{2.52} lies in $L^{2}(\mathbb{R}^2)$. Using the $L^2$--estimates again, we arrive at the conclusion $\pa_j v\in W^{2,2}(\mathbb{R}^2)$. Therefore, $\pa_j v\to 0$ as $|x|\to\infty$.
\end{proof}

Using the definition of $u_0$ and lemma \ref{d.4}, we derive that $|\nabla u|\to 0$ as $|x|\to\infty$.

\begin{lemma}\label{d.5}
For the solution $u$ of \eqref{2.42}, there establish
\be\label{2.55}
|\nabla u|^2\leq C_b|x|^{-b},~~~|x|>r_0,
\ee
where $C_b>0$ is a constant depending on $b$. If $aN<1$, then $b>0$ is an arbitrary constant. If $aN=1$ and there are at least two distant string centers, then $b=3$.
\end{lemma}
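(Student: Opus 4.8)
The plan is to deduce the gradient decay from the pointwise decay of $u$ furnished by Lemma \ref{d.3}, combined with an interior elliptic estimate carried out on balls whose radius scales with the distance to the origin. Concretely, I would fix $x$ with $r=|x|$ large, say $r>4r_0$, and work on the ball $B=B(x,r/2)$. Since all the string centres lie in $B(r_0)$ while every $z\in B$ satisfies $|z|\geq r/2>r_0$, the ball $B$ is disjoint from $\overline{B(r_0)}$, so on $B$ the function $u$ solves \eqref{2.42} with smooth coefficients; moreover $|z-p_s|\geq r/2-r_0>r/4$ there, so $\prod_{s=1}^N|z-p_s|^{-2a}$ is comparable to $r^{-2aN}$ on $B$. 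I would then pass to the rescaled function $\tilde u(y)=u(x+\tfrac r2 y)$ on the unit ball $B_1$, which satisfies $\Delta_y\tilde u(y)=(\tfrac r2)^2(\Delta u)(x+\tfrac r2 y)$.

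The remaining steps are routine estimates. On $B$ we have $|u|\leq C_b r^{-b}$ by Lemma \ref{d.3}, hence $|\e^{mu}-1|\leq C|u|\leq C r^{-b}$, while $\e^{a(u-\frac1m\e^{mu})}$ stays bounded; together with the bound on the product this gives $|\Delta u|\leq C r^{-b-2aN}$ on $B$, so $\|\Delta_y\tilde u\|_{L^\infty(B_1)}\leq C r^{2-b-2aN}$ and $\|\tilde u\|_{L^\infty(B_1)}\leq C_b r^{-b}$. Applying the interior $W^{2,p}$ estimate for the Poisson equation with $p>2$ and the embedding $W^{2,p}(B_1)\hookrightarrow C^1(\overline{B_{1/2}})$ (equivalently, interior Schauder estimates), one gets $\|\nabla_y\tilde u\|_{L^\infty(B_{1/2})}\leq C(\|\tilde u\|_{L^\infty(B_1)}+\|\Delta_y\tilde u\|_{L^p(B_1)})\leq C(r^{-b}+r^{2-b-2aN})$. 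Scaling back via $\nabla u(z)=\tfrac2r\nabla_y\tilde u(y)$ and evaluating at $y=0$ yields $|\nabla u(x)|\leq C(r^{-b-1}+r^{1-b-2aN})$, hence $|\nabla u(x)|^2\leq C(r^{-2b-2}+r^{2-2b-4aN})$. If $aN<1$, Lemma \ref{d.3} is available with an arbitrarily large exponent: for a prescribed $b>0$ one invokes it with an exponent $b'$ chosen large enough that $2b'+2\geq b$ and $2b'+4aN-2\geq b$, and since $r>r_0\geq1$ this makes both powers on the right at most $r^{-b}$, giving \eqref{2.55} for arbitrary $b>0$. If $aN=1$ with at least two distant centres, Lemma \ref{d.3} gives exponent $2$, the two exponents above both equal $-6$, and $|\nabla u|^2\leq C r^{-6}\leq C r^{-3}$ for $r>r_0$, which is \eqref{2.55} with $b=3$.

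I do not expect a genuine obstacle: the argument is a standard ``decay of the function implies decay of its gradient'' bootstrap via rescaling, and the only points requiring care are arranging $B(x,r/2)$ to avoid $\overline{B(r_0)}$ so that the coefficients of \eqref{2.42} are smooth and $\prod|z-p_s|^{-2a}$ is comparable to $r^{-2aN}$ on $B$, and keeping the exponents of $r$ straight through the dilation. An essentially equivalent alternative is to write $u=u_0+v$, note $|\nabla u_0|=O(r^{-3})$ directly from \eqref{2.1}, and run the same rescaling on \eqref{2.5} for $v$, using $\partial_j v\in W^{2,2}(\mathbb{R}^2)$ from Lemma \ref{d.4} together with the pointwise bound $0\leq v\leq -u_0$; this again gives $|\nabla v|^2=O(r^{-6})$ when $aN=1$, so $|\nabla u|^2\leq 2|\nabla v|^2+2|\nabla u_0|^2\leq C r^{-6}$. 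One records only the weaker bound $b=3$ here because that is the form actually used in the sequel (e.g.\ for the integrability of $|\nabla u|^2$ over $\mathbb{R}^2$).
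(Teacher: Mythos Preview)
Your argument is correct, but it proceeds along a genuinely different route from the paper. The paper differentiates \eqref{2.42} to obtain an equation for $\partial_j u$, then derives a differential inequality for $h=|\nabla u|^2$ of the form $\Delta h\geq c(x)h+q(x)$ (using $\Delta(\partial_j u)^2\geq 2(\partial_j u)\Delta(\partial_j u)$), and compares $h$ with the barrier $w=C|x|^{-b}$ via the maximum principle on $\mathbb{R}^2\setminus\overline{B(r_0)}$; the exponent $b=3$ in the case $aN=1$ emerges from the constraint $5\geq 2+b$ coming from the decay $q(x)=O(r^{-5})$. Your approach instead transfers the pointwise decay of $u$ from Lemma~\ref{d.3} to $\nabla u$ by rescaling on balls $B(x,r/2)$ and invoking the interior $W^{2,p}$ (or Schauder) estimate on the unit ball. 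This is a standard and clean alternative: it avoids deriving the subharmonic inequality for $|\nabla u|^2$, it does not rely on Lemma~\ref{d.4} to justify the behaviour of $h$ at infinity for the maximum principle, and it sidesteps the paper's somewhat awkward step of taking $g_0$ large when $aN=1$ to force \eqref{2.62}. Moreover your method actually yields the sharper bound $|\nabla u|^2=O(r^{-6})$ when $aN=1$ with at least two distinct centres, from which the stated $b=3$ follows trivially; the paper's barrier argument is limited to $b\leq 3$ by the decay rate of the inhomogeneous term $q$.
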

\begin{proof}
Differentiating \eqref{2.42} gives us
\bea\label{2.56}
\triangle (\pa_j u)=&&g_0\e^{a(u-{\frac{1}{m}\e^{mu}})}\prod_{s=1}^N|x-p_s|^{-2a}\left(m\e^{m u}-a(\e^{mu}-1)^2\right)(\pa_j u)\nn\\
&&+g_0\e^{a(u-{\frac{1}{m}\e^{mu}})}(\e^{m u}-1)\pa_j\left(\prod_{s=1}^N|x-p_s|^{-2a}\right).
\eea
In view of lemma \ref{d.4} and \eqref{2.53}, at infinity, there holds,
\be\label{2.57}
(\e^{m u}-1)\pa_j\left(\prod_{s=1}^N|x-p_s|^{-2a}\right)=O(r^{-b_1}),~~~\forall b_1>0,
\ee
when $a N<1$, or
\be\label{2.58}
(\e^{m u}-1)\pa_j\left(\prod_{s=1}^N|x-p_s|^{-2a}\right)=O(r^{-5}),
\ee
when $a N=1$ and there are at least two distant string centers.

Set $h=|\nabla u|^2$, according to \eqref{2.56}, when $|x|>r_0$, there holds
\bea\label{2.59}
\triangle h&=&\sum_{j=1}^{n}\triangle(\pa_j u)^2=\sum_{i,j=1}^{n}\pa_i\pa_i(\pa_j u)^2\geq\sum_{i,j=1}^{n}2(\pa_j u)(\pa_i\pa_i\pa_j u)\nn\\
&&+2\sum_{j=1}^{n}\pa_j u\left\{g_0\e^{a(u-{\frac{1}{m}\e^{mu}})}(\e^{m u}-1)\pa_j\left(\prod_{s=1}^N|x-p_s|^{-2a}\right)\right\}\nn\\
&=&2g_0\e^{a(u-{\frac{1}{m}\e^{mu}})}\prod_{s=1}^N|x-p_s|^{-2a}\left(m\e^{m u}-a(\e^{mu}-1)^2\right)h\nn\\
&&+2g_0\e^{a(u-{\frac{1}{m}\e^{mu}})}(\e^{m u}-1)\nabla u\cdot\nabla\left(\prod_{s=1}^N|x-p_s|^{-2a}\right)\nn\\
&\geq&2g_0\e^{a(u-{\frac{1}{m}\e^{mu}})}\prod_{s=1}^N|x-p_s|^{-2a}\left(m\e^{m u}-a(\e^{mu}-1)^2\right)h+q(x),
\eea
where $q(x)$ enjoys the same decay estimates as \eqref{2.57} and \eqref{2.58}.

Taking $w$ as in \eqref{2.44}, using \eqref{2.45} and \eqref{2.59}, we find, when $|x|=r>r_0$ there holds
\be\label{2.60}
\triangle(h-w)\geq 2g_0\e^{a(u-{\frac{1}{m}\e^{mu}})}\prod_{s=1}^N|x-p_s|^{-2a}\left(m\e^{m u}-a(\e^{mu}-1)^2\right)h-b^2r^{-2}w+q(x).
\ee
Assume
\bea
b_1&>&2+b~~\text{when}~~a N<1,\nn\\
5&\geq&2+b~~\text{when}~~a N=1,\nn
\eea
then we may choose a suitable $C>0$ in \eqref{2.44}, such that
\be\label{2.61}
q(x)>-b^2r^{-2}w(x),~~~|x|=r>r_0.
\ee
If $a N<1$, we may also assume
\be\label{2.62}
g_0\e^{a(u-{\frac{1}{m}\e^{mu}})}\prod_{s=1}^N|x-p_s|^{-2a}\left(m\e^{m u}-a(\e^{mu}-1)^2\right)>b^2r^{-2},~~~|x|=r>r_0;
\ee
if $a N=1$, we can make $g_0$ sufficiently large, so that \eqref{2.62} is still established.

Inserting \eqref{2.61} and \eqref{2.62} into \eqref{2.60}, we obtain
\be\label{2.63}
\triangle(h-w)\geq 2b^2r^{-2}(h-w),~~~|x|=r>r_0.
\ee
Therefore, we can let the constant $C$ in \eqref{2.44} be large that
\be\label{2.64}
\left(h(x)-w(x)\right)\big|_{|x|=r_0}\leq 0.
\ee
Hence, applying the maximum principle we arrive at $h\leq w$ for $|x|>r_0$.
\end{proof}



\medskip

{\bf Acknowledgments.}
This work was supported by NSFC-12101197 and China Postdoctoral Science Foundation. No. 2022M721022.

\end{document}